\documentclass[conference]{IEEEtran}
\IEEEoverridecommandlockouts
\usepackage{cite}
\usepackage{amsmath,amssymb,amsfonts}
\usepackage{algorithmic}
\usepackage{graphicx}
\usepackage{textcomp}
\usepackage{xcolor}
\usepackage{booktabs} 
\usepackage{stfloats}
\usepackage{subfigure}
\usepackage{float}
\usepackage{bm}
\usepackage{bbding}
\usepackage{array}
\usepackage{enumitem} 
\usepackage{amsthm}
\usepackage[linesnumbered,ruled,vlined]{algorithm2e}
\newtheorem{definition}{Definition}
\newtheorem{theorem}{Theorem}
\def\BibTeX{{\rm B\kern-.05em{\sc i\kern-.025em b}\kern-.08em
    T\kern-.1667em\lower.7ex\hbox{E}\kern-.125emX}}
\begin{document}

\title{NETR-Tree: An Efficient Framework for Social-Based Time-Aware Spatial Keyword Query
\thanks{This work was supported by the National Key R\&D Program of China [2020YFB1707903]; the National Natural Science Foundation of China [61872238, 61972254], Shanghai Municipal Science and Technology Major Project [2021SHZDZX0102], the Tencent Marketing Solution Rhino-Bird Focused Research Program [FR202001], the CCF-Tencent Open Fund [RAGR20200105], and the Huawei Cloud [TC20201127009]. We thank Nianzu Yang and Zhixian Yang for their contributions to this work. \IEEEauthorrefmark{1} Xiaofeng Gao is the corresponding author.
}}

\author{
\IEEEauthorblockN{Xiuqi Huang, Yuanning Gao, Xiaofeng Gao\IEEEauthorrefmark{1}, Guihai Chen}
\IEEEauthorblockA{MoE Key Lab of Artificial Intelligence, Department of Computer Science and Engineering\\Shanghai Jiao Tong University, Shanghai, China\\Email: {\{huangxiuqi, gyuanning\}@sjtu.edu.cn, \{gao-xf, gchen\}@cs.sjtu.edu.cn}}
}

\maketitle

\begin{abstract}
The development of global positioning system stimulates the popularity of location-based social network (LBSN) services. With a large volume of data containing locations, texts, check-in information, and social relationships, spatial keyword queries in LBSNs have become increasingly complex. In this paper, we identify and solve the \emph{Social-based Time-aware Spatial Keyword Query} (STSKQ) that returns the top-k objects by considering geo-spatial score, keywords similarity, visiting time score, and social relationship effect. To tackle STSKQ, we propose a two-layer hybrid index structure called \emph{Network Embedding Time-aware R-tree} (NETR-Tree). In the user layer, we exploit the network embedding strategy to measure the relationship effect in users' relationship network. In the location layer,  we build a Time-aware R-tree (TR-tree) considered spatial objects' spatio-temporal check-in information, and present a corresponding query processing algorithm. Finally, extensive experiments on two different real-life LBSNs demonstrate the effectiveness and efficiency of our methods, compared with existing state-of-the-art methods.
\end{abstract}

\begin{IEEEkeywords}
Location-based Social Network Service, Top-k Spatial Keyword Query, Network Embedding, Time-aware Query Processing
\end{IEEEkeywords}

\section{Introduction}
Due to the booming popularity of social media and the advance in geo-positioning technology, location-based social networks (LBSNs) have been proliferating in recent years. LBSN Services, such as \emph{Foursquare, Yelp}, have huge amounts of data, consisting of spatial locations, texts, check-in information, and social relationship. 

In LBSNs, users' visiting to spatial places may further be shaped by their social relationships~\cite{cho2011friendship}, as they are more likely to visit places that their friends and people having similar preferences to them visited in the past. In addition, according to the users' check-in records, we can know that different spatial places have different suitable visiting time for visitors. For instance, bars and nightclubs may be less attractive during daytime, and conversely some art museums do not open at night. In particular,~\cite{xiangguo2020social} points that information in social networks can become an important basis for analyzing user preferences in spatial keyword query.

With a rich source of spatio-temporal data and social relationships in LBSNs, we can obtain the true needs of the users more precisely through analysis of these extra information. However, spatial keyword query becomes increasingly complex when taking temporal information and social relationships into consideration at the same time. Therefore, a framework that can aggregate spatial, temporal and social information well is urgently needed to tackle with Social-based Time-aware Spatial Keyword Query (STSKQ).

Recently, various approaches~\cite{li2011ir,liu2015linq,wu2012framework} have been designed to support spatial keyword query, and these approaches mainly focus on keywords and spatial location. As~\cite{chen2017time,cho2011friendship} state, the spatial keyword query cannot satisfy users' requirements if temporal and social information are neglected.~\cite{xiangguo2020social} proposes a hybrid index structure named SAIR-Tree that considers the attributes of social, spatial, and textual information with an approximate algorithm and an exact algorithm. Nevertheless, SAIR-Tree still ignores the temporal information in location-based social networks (LBSNs). Although~\cite{vi2012spatial} seems to take both temporal information and social information into account,~\cite{vi2012spatial} focuses on the real-time service like \emph{Twitter}. In their model, fresher tweets written by users that have more followers have priority. Besides, the number of replies also contributes to the priority. Therefore, their model does not make full use of social network and is not suitable for the services like \emph{Foursquare}.

In this paper, we explore the \emph{Social-based Time-aware Spatial Keyword Query} (STSKQ), which returns a set of top-$k$ objects taking geo-spatial, textual, temporal, social score into consideration. In view of the limitations of the existing frameworks, we design a novel spatial keyword query index, \emph{Network Embedding Time-aware R-tree} (NETR-Tree), and its corresponding query processing algorithms. More specifically, we propose a neighbor selection method which preserves both the local and global network structures based on historical check-in records, and then tackle users' social relationship with their neighbors using the network representation learning (NRL) approach \cite{tang2015line}, i.e., network embedding. We leverage it to learn the structural information of users' representation network. Based on the embedding strategy, the effect of social relationship is measured by the similarity of embedding vectors between users and their neighbors with all users' check-in records. Moreover, to further perform temporal analysis, we split every day hourly inspired by~\cite{wen2015kstr}. Then, we extract check-in time distribution for objects, to measure the visiting time scores for objects at different time slots.

To sum up, the contributions of this paper are three-fold:

\begin{itemize}
\item We formulate the problem of Social-based Time-aware Spatial Keyword Query (STSKQ), which takes geo-spatial score, keywords similarity, visiting time score, and social relationship effect into consideration at the same time to return expected top-k results. 

\item We design a hybrid index structure, i.e., NETR-Tree that exploits network embedding and corresponding efficient pruning strategies to tackle STSKQ. We present a theoretical analysis of NETR-Tree's time complexity, proving that although aggregating extra temporal information and social relationships compared to the basic spatial keyword query, NETR-Tree still holds high efficiency.

\item We conduct extensive experiments to verify the validity and efficiency of the proposed method. Results on two real-world benchmark datasets demonstrate show that our framework outperform the state-of-the-art algorithms for processing STSKQ by a notable margin.
\end{itemize}

The rest of this paper is organized as follows. Sec.~\ref{related} reviews related work. Sec.~\ref{formulation} formulates the problem of STSKQ. We elaborate the NETR-Tree in Sec.~\ref{NETR}. The query processing algorithm based on the NETR-Tree and its time complexity are introduced in Sec.~\ref{query_processing}. In Sec.~\ref{Experiment}, we propose three baseline algorithms and show the experimental results. Finally, we conclude the paper in Sec.~\ref{conclusion}.

\section{Related Work} \label{related}
In this section, we overview the existing techniques for the STSKQ problem, including spatial keyword queries, social-aware spatial keyword queries, social network embedding, and time-aware retrieval.

\subsection{Spatial Keyword Queries}
Recently, spatial keyword queries have been gaining a lot of attention \cite{li2011ir,wu2012framework,chen2021location}. \cite{cao2012spatial} presents a survey for various types of functionality as well as corresponding ideas on spatial keyword query. \cite{chen2013spatial} gives a comprehensive experimental evaluation for different spatial keyword query indices and query processing techniques. Meanwhile, there also exist many methods solving the variants of spatial keyword query, such as collective spatial keyword query~\cite{gao2016efficient,yu2019col}, attribute-aware spatial keyword query~\cite{liu2015linq}, spatial keyword query over streaming data~\cite{wang2017top}, personalized and approximated spatial keyword query approach~\cite{2020personal}, popularity-based top-k spatial keyword query~\cite{2019popularity}, etc. All the above methods only focus on the distance constraint from users‘ spatial information. The sights of social information and temporal information are lost. Hence, these methods cannot effectively solve the STSKQ problem.

\subsection{Soaicl-Aware Spatial Keyword Queries}
With the dramatic increase in social data, many studies have begun to consider the social influence and evolve social-aware SKQ problem. \cite{cozza2013spatio,wu2014social,ahuja2015geo} enriches the semantics of the conventional spatial keyword query by introducing social relevance attributes. \cite{jin2020efficient} attempt to handle batch processing on multiple reverse geo-social keyword queries. \cite{WANG2021,xiangguo2020social} further satisfy the query needs of user groups under spatial proximity and social relevance. \cite{zhao2020efficiently} considers not only the relevance but also the diversity of the result. In addition, there are some studies~\cite{hoang2016unified,chen2017time,chen2021time} combine keywords with spatial and temporal constraints, but ignoring the social relationship. Although various types of SKQ problems have been studied on LBSNs, the existing techniques are not applicable to the STSKQ that aims at finding top-k results under spatial, temporal and social information.

\subsection{Social-Based Network Embedding}

In recent years, neural representation learning in language modeling~\cite{mikolov2013efficient} has made major strides. Lots of embedding learning models have been proposed to learn the embedding vectors of nodes by predicting nodes' neighborhoods. DeepWalk~\cite{perozzi2014deepwalk} exploits the random walk algorithm to generate sequences of instances to obtain the embedding result vectors of nodes. LINE~\cite{tang2015line} is learned from a large-scale information network embedding using the edge-sampling algorithm to improve the effectiveness and gain local relationship influence.

As several works analysis~\cite{cho2011friendship,khani2018context,sohail2018social}, it is notable that a user's interest and behavior often correlate to their friends. However, those papers neglect that in the enormous social network, there are many similar users that have not become friends. Those unacquainted users can also contribute to spatial keyword queries of the target user. In this paper, we show that the network embedding strategy with an elaborated neighbor selection method can be well adopted for STSKQ.

\subsection{Time-Aware Retrieval}

As the factor of time has been gaining increasing importance within search contexts, time-aware retrieval has received much attention from researchers. However, time-aware retrieval in previous research puts emphasis on the spatial, textual, and temporal information, which cannot handle social-based query efficiently. TA-Tree~\cite{chen2017time} proposed a feasible solution that takes a query with visiting time interval. To enable time-aware retrieval, TA-Tree measured visiting time score by the intervals' overlap between the query and spatial objects. In the meantime, there also exist other kinds of time-aware criteria. For instance, \cite{chen2015temporal} utilized exponential time decay function to measure the recency over a stream of geo-textual objects such as tweets. In~\cite{wen2015kstr}, the visiting probability of a point of interest (POI) is defined as a time-aware criterion in the problem of travel route recommendation. It is worth mentioning that there is still a gigantic gap between these criteria and the problem of STSKQ.

\section{Problem Formulation} \label{formulation}

In this section, we formally present concise definitions of STSKQ. Consider a spatial objects dataset $D = \{ o_1, o_2, o_3, \dots \}$. An object $o$ in $D$ is denoted as a tuple $\langle o.l, o.W, o.T \rangle$, where $o.l$ is a spatial location composed of latitude and longitude, $o.W$ is a set of keywords, and $o.T$ 
is the check-in time distribution set for $o$, which is defined as follow.

\begin{definition}[Check-in Time Distribution]\label{def:1}
Given a spatial object $o$, its check-in time distribution is denoted as $o.T$. According to~\cite{wen2015kstr}, we split day time into hourly-based time intervals denoted as $\Gamma$. We define the probability of an object $o$ to be visited during the time interval $\tau$ as $C(o,\tau)/C_{total}(o)$, where $C(o,\tau)$ is the number of check-ins recorded in $o$ during time interval $\tau$, and $C_{total}(o)$ is the total number of check-ins in $o$. Eqn.~\eqref{eq:time_distribution_o} describes the formulation of $o.T$.
\begin{equation} \label{eq:time_distribution_o}
  o.T = \bigcup_{\tau \in \Gamma }\frac{C(o,\tau)}{C_{total}(o)}
\end{equation}

\end{definition}

A time-aware query $q$ is represented as a tuple $\langle q.u, q.l, q.W, q.t\rangle$, where  $q.u, q.l $, and $q.W$ represent a user, the location of the user, and
a set of required keywords respectively, and $q.t$ is a query time stamp at which the query is issued. Next, we formally define social relationship network in LBSN.
\begin{definition}[Social Relationship Network]\label{def:2}
A social relationship network is defined as an unweighted and undirected graph $G = (U,E)$, where $U$ is the set of vertices, each representing a user in LBSN, and $E$ is the set of edges between the vertices, each representing the relationship between two users.
\end{definition}

Accordingly, users are represented as a tuple $\langle \emph{\textbf{V}}, Nrs, C\rangle$, where $\emph{\textbf{V}}$ represents the embedding vectors obtained by network embedding strategy, i.e., LINE~\cite{tang2015line}. For a user $u$, $Nrs(u)$ is the set of $u$'s neighbors and $C(u)$ is a set of $u$'s check-in records numbers. Then we define STSKQ formally.

\begin{definition}[Social-based Time-aware Spatial Keyword Query]\label{def:3}
Given a spatial objects set $D$, a user $u$ in LBSN, and social-based time-aware spatial keyword query $q$ issued by $u$, the query returns a result set $Top_k(u,q)$, where $Top_k(u,q) \subset D$, $\left|Top_k(u,q)\right| = k$, and $\forall o_i, o_j$: $o_i \in Top_k(u,q), o_j \in D - Top_k(u,q)$, it holds that $F(u, q, o_i) \geq F(u, q, o_j)$, where $F$ is the score function.
\end{definition}

Note that, $F(u, q, o)$ in Def. \ref{def:3} is composed of four aspects, including geo-spatial score, keywords similarity, visiting time score, and social relationship effect. $F(u, q, o)$ will be further discussed in the Sec.~\ref{NETR} and Sec.~\ref{query_processing}. Table~\ref{tab:notation} lists the notations and their definitions.

\begin{table}
\centering
\caption{List of Notations}
\label{tab:notation}
      \begin{tabular}{cc}
        \hline
        \textbf{Notation} & \textbf{Definition} \\
        \hline
        $D$ & the spatial objects set, $D = \{ o_1, o_2, o_3, \dots \}$\\
        $o$ & the objects in $D$, $\langle o.l, o.W, o.T \rangle$\\
        $o.l$ & the spatial location\\
        $o.W$ & the set of keywords\\
        $o.T$ & the check-in time distribution\\
        $u$ & the users, $\langle \emph{\textbf{V}}, Nrs, C\rangle$\\
        $\emph{\textbf{V}}$ & the represented vectors of users\\
        $Nrs(u)$ & the set of $u$'s neighbors\\
        $C(u)$ & the set of $u$'s check-in records numbers\\
        $q$ & the social-based time-aware spatial keyword query\\
        $Top_k(u,q)$ & the result set of the query $q$ issued by $u$\\
        $F$ & the ranking score function, $F(u, q, o)$ or $F(u, q, \eta)$\\
        $\eta$ & the node in TR-tree\\
        $N$ & the non-leaf node in TR-tree\\
		$N.rec$ & the minimum bounding rectangle of TR-tree\\
		$N.W$ & \begin{tabular}[c]{@{}c@{}}the TF-IDF weights set of $N$'s\\  descendant nodes' keywords\end{tabular}\\
		$N.T$ & \begin{tabular}[c]{@{}c@{}}the maximal check-in distribution for each time \\ interval of $N$'s descendant nodes\end{tabular}\\
		$N.cEntropy$ & the category entropy \\
		$F_t$ & the visiting time score function, $F_t(\eta,t)$ \\
		$F_s$ & the social relationship effect function, $F_s(u,\eta)$ \\
		$F_g$ & the geo-spatial score function, $F_g(q,\eta)$ \\
		$F_k$ & the keywords similarity function, $F_k(q,\eta)$ \\
        \hline
    \end{tabular}
\end{table}

\section{NETR-Tree}\label{NETR}

\begin{figure*}
  \centering
  \includegraphics[width=0.95\textwidth]{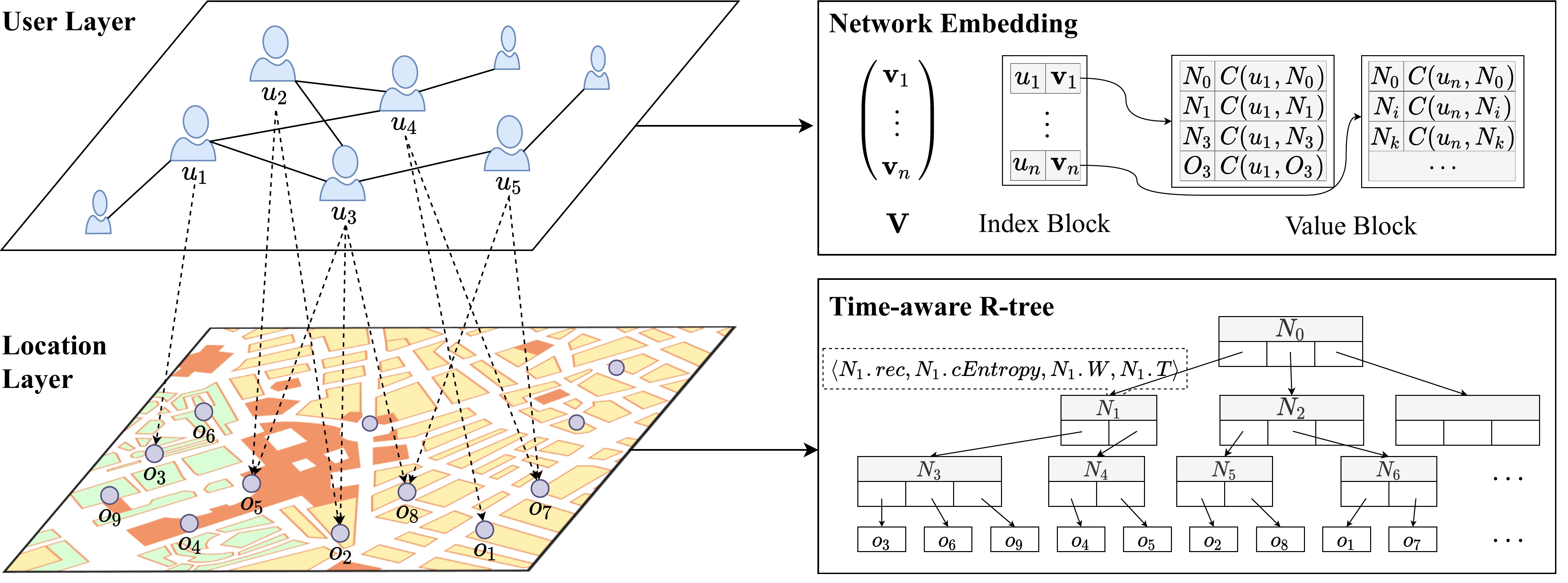}
  \caption{Overview of NETR-Tree}
  \label{fig:overview_NETR-Tree}
\end{figure*}

An overview of the proposed index, NETR-Tree (Network Embedding Time-aware R-tree), is shown in Fig.~\ref{fig:overview_NETR-Tree}. Taking the input from both the user layer and the location layer, we construct the index consisting of two parts: network embedding and Time-aware R-tree (TR-tree). Users and their relationship form a social relationship network. Hence, in the network embedding part, NETR-Tree embeds users into embedding vectors on the basis of their social relationship network structure. Supported by a novel neighbor selection method, for a user $u$, the social effect of $u$'s neighbors can be calculated by the similarity between their embedding vectors and neighbors' historical check-ins. In TR-tree part, each internal node records spatio-textual information together with check-in time distribution. According to the query tuple, TR-tree prunes out spatio-textually and temporally irrelevant objects during the query processing. Therefore, NETR-Tree efficiently tackles the problem of STSKQ.

\subsection{TR-Tree Structure}

As shown in the bottom right of Fig.~\ref{fig:overview_NETR-Tree}, in order to process temporal information, i.e., users' visiting time and objects' check-in time distribution in spatial keyword queries, we build a TR-tree, which is inspired by \cite{wen2015kstr}. TR-tree is built upon IR-tree~\cite{li2011ir} with temporal information. Each leafnode in TR-tree is associated with a spatial object $o$ formed in $\langle o.l, o.c, o.W, o.T \rangle$, in which $o.l$ and $o.c$  is the location and category of $o$, $o.W$ is a set of $o$'s keywords weight calculated by TF-IDF. $o.T$ is the check-in time distribution for those time intervals defined in Sec.~\ref{formulation}. As for non-leafnodes in TR-tree, we extend the strategy of document summary in IR-tree~\cite{li2011ir} to carry spatio-textual and temporal information in its non-leafnode node $N$. For geo-spatial measurement, we add a novel feature, i.e., category entropy, to measure the heterogeneity of venue categories in the area of TR-tree node. In the real world, commercial streets providing various services are more popular. For example, a man may sing karaoke after midnight snack and he may watch movies after shopping. Category entropy is formally defined as follow.

\begin{definition}[Category Entropy]\label{def:4}
  Given a TR-tree node $\eta$, its category entroy is denoted as $\eta.cEntroy$. We denote the set of spatial objects with category $c_i$ in the area of $\eta$ as $S_{c_i}(\eta)$, the entire set of spatial objects in $\eta$ as $S(\eta)$, and the entire set of categories as $Cat$. The category entropy $\eta.cEntroy$ can be defined in Eqn.~\eqref{eq:cat_entropy}.
  \begin{equation} \label{eq:cat_entropy}
  \eta.cEntroy = -\sum_{c_i\in Cat}\frac{\left|S_{c_i}(\eta)\right|}{\left|S(\eta)\right|}\log\frac{\left|S_{c_i}(\eta)\right|}{\left|S(\eta)\right|}
  \end{equation}
  
\end{definition}

Thus, a non-leafnode $N$ is of form $\langle N.rec, N.cEntropy, N.W, N.T \rangle$. $N.rec$ is the minimum bounding rectangle (MBR) of TR-tree. $N.W$ is a TF-IDF weight set containing all the TF-IDF weights of $N$'s descendant nodes' keywords that are calculated by the maximal term frequency TF$^{max}$ and IDF (refer to~\cite{li2011ir} for more details). $N.T$ is the check-in time distribution of $N$ maintaining the maximal check-in distribution for each time interval of its descendant. In general, there are two new elements added in the TR-tree compared with the IR-tree, i.e., the $o.c$ or $N.cEntropy$ for the geo-spatial category or category entropy, $o.T$ or $N.T$ for check-in distribution. Next, we detail the design of $o.T$, $N.T$, and a novel time-aware criterion, visiting time score, for each node.

We separate one day into hourly-based time intervals denoted as $\Gamma$ inspired by~\cite{wen2015kstr}. We define the check-in probability for any node $\eta$ during a time interval $\tau\in\Gamma$ as $\eta.T(\tau)$. Then, for each spatial object $o$ associated in TR-tree's leafnode, we maintain a set of check-in distribution of each time interval, i.e., $o.T = \bigcup_{\tau\in \Gamma}o.T(\tau)$, which is formally defined in Sec.~\ref{formulation}. In order to alleviate the storage overhead, for non-leafnode $N$, we store the check-in probability as the maximum among its descendants':
\begin{displaymath}
N.T = \bigcup_{\tau\in \Gamma}\max_{\eta \in N.children}\eta.T(\tau)
\end{displaymath}

For any node $\eta$, $\eta.T(t)$ is equal to $\eta.T(\tau)$ for  $t\in \tau \in \Gamma$. Thereafter, we define the visiting time score function $F_t$  at time $t$ as:
\begin{equation}
\label{eq:visiting_time_score}
F_t(\eta,t) = \frac{\eta.T(t)}{\max_{\tau\in\Gamma}\eta.T(\tau)}
\end{equation}

\subsection{Social Relationship Network Embedding}

We measure social relationship effect among users with the following steps: neighbor selection method, network embedding strategy, user-inverted storage scheme. 

\subsubsection{Neighbor Selection.}\label{neighbor_selection} In this work, we first take full advantage of users' historical check-in records to extract the following feature of users' preference:

\begin{itemize}
  \item Check-in area: We adopt a spatio-temporal cluster algorithm ST-DBSCAN~\cite{birant2007st} on users' check-in records to obtain a set of clusters $Clr$. For each user $u$, we maintain a vector of the check-in number for each cluster by $\bigcup_{clr_i\in Clr}C(u,clr_i)$, where $C(u,clr_i)$ is the number of $u$'s check-ins in the cluster $clr_i$. 
  \item Check-in time: We map the time of previous check-ins into hourly-based time intervals $\Gamma$. For each user $u$, we maintain a vector of the check-in number for each time interval by $\bigcup_{\tau\in \Gamma}C(u,\tau)$, where $C(u,\tau)$ is the number of $u$'s check-ins during the time interval $\tau$. 
  \item Check-in category: For each user $u$, we maintain a vector of the check-in number for each category by $\bigcup_{c_i \in Cat}C(u,c_i)$, where $C(u,c_i)$ is the number of $u$'s check-ins to the spatial objects with category $c_i$. 
\end{itemize}

For these three features, we use the Cosine similarity metric to calculate the check-in preference similarity between users. After that, given a target user $u$, non-dominated users can be determined by a multi-dimensional optimization Skyline algorithm~\cite{lee2014toward}. In this case, it is said that user $u_i$ dominates another user $u_j$ if $v_i$ is not less than $v_j$ in all dimensions of similarities with $u$ and is better than $v_j$ at least in one dimension ($v_{k}$ is the embedding vector corresponding to $u_{k}$). We propose that non-dominated users are a part of neighbors that will affect the target user most. Besides, for a target user $u$, we merge $u$'s non-dominated users and $u$'s friends to be the final neighbors of $u$.

\subsubsection{Network Embedding.} NETR-Tree requires the similarities between the user and neighbors. As shown in the upper right of Fig.~\ref{fig:overview_NETR-Tree}, on the top of TR-tree we leverage LINE~\cite{tang2015line} to learn a network embedding from users' relationship network structure. Formally, we take the social relationship network $(U,E)$ defined in Sec.~\ref{formulation} as an input graph of LINE, and after training structure features from $(U,E)$, LINE learns a $n\times d$ matrix $\emph{\textbf{V}}$ consisting of all users' represented vectors (Fig.~\ref{fig:overview_NETR-Tree}) where $n$ is the number of user and $d$ is the dimension of the embedding. Thereafter, for a pair of neighbors $u_i$ and $u_j$, we compute the cosine similarity of their corresponding embedding vectors $\emph{\textbf{V}}(u_i)$ and $\emph{\textbf{V}}(u_j)$ to weight the social relationship effect. Then, we define the social relationship effect function $F_s$ for user $u$'s decision of visiting NETR-Tree leafnode $o$ as:

\begin{equation} 
\begin{split}
  \label{eq:social_friendship_effect_object}
  F_s(u,o) = \frac{1}{|u.Nrs|} \sum_{u_i \in u.Nrs} \frac{\emph{\textbf{V}}(u_i)\cdot \emph{\textbf{V}}(u)}{|\emph{\textbf{V}}(u_i)||\emph{\textbf{V}}(u)|} \\ \times \frac{C(u_i,o)}{\max_{o_j \in \{o\text{'s brothers\}}}C(u_i,o_j)}
\end{split}
\end{equation}
where $u.Nrs$ is a set of $u$'s neighbors and $C(u_i,o_j)$ is the number of $u_i$'s historical check-ins in $o_j$. In Eqn.~\eqref{eq:social_friendship_effect_object}, $F_s$ is normalized by the number of $u$'s neighbors and each neighbor's maximal check-in number within $o$'s brother nodes in NETR-Tree. Therefore, in the internal nodes of NETR-Tree, for each user $u$, we should get access to $u$'s historical check-ins records. To optimize both processing time and space consumption, we present the user-inverted storage scheme.

\subsubsection{User-inverted Storage.}\label{scheme:user-inverted} The upper right of Fig.~\ref{fig:overview_NETR-Tree} illustrates our \emph{user-inverted storage scheme}, i.e., network embedding user-check-in value blocks. The blocks are maintained based on TR-tree and it has two parts, namely, an index block and value blocks. Similar with inverted file, users are the user-check-in file's equivalent of keywords. Consequently, the index block consists $|U|$ entries. Each entry for user $u$ contains its corresponding embedding vector $\emph{\textbf{v}}$, and points to a value block that contains user-check-in values. Inside a value block of user $u$ is a list of $\{\eta,C(u,\eta)\}$ recording the number of $u$'s check-ins in TR-tree node $\eta$. Similar with temporal check-in information, to cut down space redundancy, for non-leafnode $N$, $C(u,N)$ records $u$'s maximum check-in number among all objects in $N.rec$.

\begin{algorithm}
\SetKwFunction{LINE}{LINE}
\caption{\small{Network Embedding User-check-in Value Blocks Construction}}
\KwIn{social relationship network, $(U,E)$; \\ \ \ \ \ \ \ \ \ \ the leafnodes set of TR-tree, $O$; \\ \ \ \ \ \ \ \ \ \ the list of visited objects for each user, $L$; \\ \ \ \ \ \ \ \ \ \ the users' check-in history in objects, $C$ }
\KwOut{network embedding user-check-in value blocks, $NEB$}
\label{alg:network_embedding_construction}
$\bm{V} \leftarrow $\LINE$(U,E) ${\tcp*[l]{obtain a network embedding matrix}}
$NEB \leftarrow \emptyset${\tcp*[l]{declare network embedding blocks}}
\ForEach{user $u_i \in U$}
{
  $\bm{v_i} \leftarrow \bm{V}(u_i)$\;
  $VB(u_i) \leftarrow \emptyset${\tcp*[l]{declare a value block for $u_i$}}
  \ForEach{object $o_i \in L(u_i)$}
  {
    $VB(u_i) \leftarrow VB(u_i) \cup \{o_i, C(u_i,o_i)\}$\;
    $\eta \leftarrow O(o_i)$\;
    \While {$\eta.parent$ is not null}
    {
      \If{$VB(u_i)$ contains no check-in history \ \ \ of $u_i$ in $\eta.parent$} 
      {
        $VB(u_i) \leftarrow VB(u_i) \cup \{\eta.parent,C(u_i,\eta)\} $\;
      }
      \ElseIf{$C(u_i,\eta.parent) < C(u_i,\eta)$} 
      {
        $C(u_i,\eta.parent) \leftarrow C(u_i,\eta)$\;
      }
      $\eta \leftarrow \eta.parent$\;
    }
  }
  $NEB \leftarrow NEB \cup \{\bm{v_i}, VB(u_i)\}$\;
}
\Return{$NEB$}\;
\end{algorithm}

Algorithm~\ref{alg:network_embedding_construction} outlines the implementation of network embedding and user-inverted storage scheme. After obtaining a network embedding matrix of users by LINE (line 1), for each user $u$, a bottom-up update strategy, maintaining user-check-in values from leafnodes up to the root, is applied (lines 6-14) to improve the construction efficiency. Thereafter, we define the social relationship effect function $F_s$ for user $u$'s decision at any node $\eta$ as:

\begin{equation}
\begin{split}
  \label{eq:social_friendship_effect}
  F_s(u,\eta) = \frac{1}{|u.Nrs|} \sum_{u_i \in u.Nrs} \frac{\emph{\textbf{V}}(u_i)\cdot \emph{\textbf{V}}(u)}{|\emph{\textbf{V}}(u_i)||\emph{\textbf{V}}(u)|} \\\times \frac{C(u_i,\eta)}{\max_{\eta_j \in \{\eta\text{'s brothers\}}}C(u_i,\eta_j)}
\end{split}
\end{equation}

\section{Query Processing} \label{query_processing}

In this section, we present STSKQ processing algorithm based on NETR-Tree and the analysis of its time complexity.

\subsection{STSKQ Using NETR-Tree}

To process STSKQ returning a set $Top_k(u,q)$ for user $u$ and query $q$, we exploit the best-first traversal that searches the entry with the largest score in a heap. The score function includes geo-spatial score, keywords similarity, visiting time score, and social relationship effect, while visiting time score and social relationship effect are defined in Eqn.~\eqref{eq:visiting_time_score} and Eqn.~\eqref{eq:social_friendship_effect}. Thus, we define the score for geo-spatial score, keywords similarity as follows:

\begin{definition}[Geo-spatial Score]\label{def:location_proximity}
Geo-spatial scores are comprised of two modules: category entropy and location proximity. Category entropy is defined in Eqn.\eqref{eq:cat_entropy}. Let $\delta_{max}$ denote the maximal search radius in the location layer, $\delta(q,o)$ be the Euclidian distance between query $q$ and leafnode, i.e., spatial object $o$, and $\min\delta(q,N.rec)$ represent the minimum Euclidian distance between $q$ and non-leafnode $N$'s MBR. The location proximity between $q$ and NETR-Tree node $\eta$ is defined as:
\begin{equation}
  \label{eq:location_proximity}
  l(q,\eta) = 
  \begin{cases}
    1-\frac{\delta(q,\eta)}{\delta_{max}} &   \text{$\eta$ is a leafnode}   \vspace{0.1in}\\
    1-\frac{\min\delta(q,\eta.rec)}{\delta_{max}} & \text{$\eta$ is a non-leafnode} 
  \end{cases}
\end{equation}
Thereafter, the geo-spatial score between $q$ and NETR-Tree node $\eta$ is defined as:
\begin{equation}
\label{eq:geo_spatial_score}
F_g(q,\eta) = \theta \times \eta.cEntropy + (1 - \theta) \times l(q,\eta)
\end{equation}

\end{definition}

\begin{definition}[Keywords Similarity]\label{def:keywords_similarity}
As stated in Sec.~\ref{NETR}, $o.W$ and $N.W$ are the sets that contain all keywords' TF-IDF weight of $o$ and $N$ respectively. Therefore, for any node $\eta$ in NETR-Tree, the keywords similarity between $q$ and $\eta$ is defined as:
\begin{equation}
  \label{eq:keywords_similarity}
  F_k(q,\eta) = \frac{1}{\phi_{max}\times|q.W|} \sum_{w\in q.W}\eta.W(w)
\end{equation}
where $\phi_{max}$ is used for normalization.
\end{definition}

As demonstrated in~\cite{vi2012spatial}, a complicated ranking score function is necessary when we take temporal and social information into account in spatial keyword query. Finally, combining Eqn.~\eqref{eq:visiting_time_score}, Eqn.~\eqref{eq:social_friendship_effect}, Eqn.~\eqref{eq:geo_spatial_score}, and Eqn.~\eqref{eq:keywords_similarity}, a carefully designed ranking score function for an node $\eta$ in NETR-Tree is defined as:
\begin{equation}
\begin{split}
  \label{eq:ranking_score_objects}
  F(u,q,\eta) = \ & \alpha\times F_g(q,\eta) + \beta\times F_k(q,\eta) + \gamma \times F_s(u,\eta)  \\ & + (1-\alpha-\beta-\gamma) \times  F_t(\eta,q.t)
\end{split}
\end{equation}

STSKQ processing is sketched in Alg.~\ref{alg:netr_search}. A max heap is employed to keep the index nodes and objects sorted in descending order of their scores (line 1). If the first entry in the heap is an object, it is the best object in the heap and will be inserted into the result set $Top_k(u,q)$ (lines 3-5). For any node $\eta$ at time $t$, if $\eta.T(t)$ is $0$, it indicates that all the objects inside $\eta$'s area are closed at $t$, and so it is unnecessary to visit $\eta$'s child/descendant nodes (lines 6-7). Besides, objects and nodes outside search radius, not containing all query keywords or with smaller scores
than the top-$k$ objects in the heap are pruned out (lines 6-7,10). In the end, it will return top-$k$ objects.

\begin{algorithm}
\SetKw{Continue}{continue}
\caption{\small STSKQ Using NETR-Tree}
\label{alg:netr_search}
\KwIn{a user, $u$; a query, $q$; Top-$k$ result, $k$; the root of NETR-Tree, $root$ }
\KwOut{Top-$k$ objects, $Top_k(u,q)$}
Maxheap.insert($root$, $\infty$)\;
\While {Maxheap.size() $\neq$ $0$}
{
  $N$ $\leftarrow$ Maxheap.first()\;
  \uIf{ $N$ is an object}
  {$Top_k(u,q)$.insert($N$)\;
  }
  \uElseIf{$N.T(q.t)=0$ or $q.W \nsubseteq N.W$ or $\delta(N,q)>r$}
  {
    \Continue\;
  }
  \Else{
    \For {$n_i$ $\in$ $N$.entry}
    {
      \If{Number of objects with larger score \ \ \ than $ F(u,q,n_i)$  in Maxheap  $<$ ($k-Top_k(u,q)$.size())}
      {Maxheap.insert($n_i$, $ F(u,q,n_i)$)\;}
    }
  }
  
}
\Return{$Top_k(u,q)$}\;
\end{algorithm}

We prove the correctness of Alg. \ref{alg:netr_search} by Thm. \ref{theorem:1}, which guarantees the theoretical reliability of NETR-Tree.
\begin{theorem}\label{theorem:1}
Given a user $u$, the score of an internal node $N$ is larger than its descendant object $o$ for any query $q$.
\end{theorem}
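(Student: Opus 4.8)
The plan is to exploit the fact that the ranking function in \eqref{eq:ranking_score_objects} is a \emph{fixed} nonnegative linear combination of the four component scores $F_g, F_k, F_s, F_t$, and that the weights $\alpha,\beta,\gamma,1-\alpha-\beta-\gamma$ are the \emph{same} for $N$ and for its descendant object $o$. Hence it suffices to prove the monotonicity statement $F_\bullet(N)\geq F_\bullet(o)$ separately for each of the four components; adding the four inequalities with the common weights then delivers $F(u,q,N)\geq F(u,q,o)$ (I read the ``larger than'' in Thm.~\ref{theorem:1} as $\geq$, since equality can occur, e.g. when $N$ has a single descendant). So the whole argument reduces to four self-contained claims.

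For the geo-spatial term \eqref{eq:geo_spatial_score} I would treat the two summands independently. A single leafnode carries exactly one category, so its category entropy from \eqref{eq:cat_entropy} is $-1\cdot\log 1 = 0$, whereas $N.cEntropy\geq 0$ because entropy is nonnegative; thus $N.cEntropy\geq o.cEntropy$. For location proximity, since $o$ lies inside the MBR $N.rec$, the nearest point of $N.rec$ to $q$ is no farther than $o$ itself, i.e. $\min\delta(q,N.rec)\leq \delta(q,o)$, so by \eqref{eq:location_proximity} $l(q,N)\geq l(q,o)$. As $\theta\in[0,1]$, both summands of \eqref{eq:geo_spatial_score} are larger for $N$, giving $F_g(q,N)\geq F_g(q,o)$. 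The keyword term \eqref{eq:keywords_similarity} is immediate: by construction $N.W$ stores, for every keyword $w$, the maximal TF-IDF weight over $N$'s descendants, so $N.W(w)\geq o.W(w)$ termwise and the normalized sum over $q.W$ can only increase.

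The numerators of the remaining two terms follow the same max-over-descendants pattern: for \eqref{eq:visiting_time_score}, $N.T(\tau)=\max_{\eta\in N.children}\eta.T(\tau)$ yields $N.T(q.t)\geq o.T(q.t)$; for \eqref{eq:social_friendship_effect}, the user-inverted storage sets $C(u_i,N)$ to the maximal check-in count of $u_i$ over the objects in $N.rec$, so $C(u_i,N)\geq C(u_i,o)$ for every neighbor $u_i$, while the embedding-similarity factor $\bm{V}(u_i)\cdot\bm{V}(u)/(|\bm{V}(u_i)|\,|\bm{V}(u)|)$ does not depend on the node. The main obstacle, and the step I would scrutinize hardest, is the \emph{normalization} in the denominators: $F_t$ divides by $\max_{\tau}\eta.T(\tau)$ and the check-in factor of $F_s$ divides by $\max_{\eta_j}C(u_i,\eta_j)$ over the node's siblings, and these denominators are themselves larger for $N$ than for $o$. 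Since numerator \emph{and} denominator both grow when passing from $o$ to its ancestor $N$, monotonicity of the ratio is not automatic; indeed one can construct children whose time (or check-in) profiles push the node's denominator up faster than its numerator, breaking the termwise bound in isolation. To close the argument I would therefore need the normalizers to reference a common quantity — a global maximum or the same sibling set across both levels — so that the denominators cancel and only the numerator comparison $N.T(q.t)\geq o.T(q.t)$ (resp. $C(u_i,N)\geq C(u_i,o)$) survives. I expect this normalization bookkeeping, rather than the entropy, distance, or keyword bounds, to carry the real content of the proof.
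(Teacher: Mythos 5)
Your decomposition is exactly the paper's: the published proof also argues componentwise, showing $F_g(q,N)\ge F_g(q,o)$ from MBR containment plus category heterogeneity, $F_k(q,N)\ge F_k(q,o)$ from the $\mathrm{TF}^{max}$ construction of $N.W$, and then asserting $F_t(N,t)\ge F_t(o,t)$ and $F_s(u,N)\ge F_s(u,o)$ from the fact that the check-in probability and the user check-in counts stored at $N$ are maxima over its descendants, before summing with the common nonnegative weights. Your treatment of the first two components matches the paper's in substance (you add the observation that a leafnode's entropy is $0$, which the paper leaves implicit).

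The normalization obstacle you flag for $F_t$ and $F_s$ is, however, not something you failed to resolve that the paper resolves --- the paper's proof simply does not address it. It passes from ``the numerator at $N$ dominates the numerator at $o$'' directly to ``$F_t(N,t)\ge F_t(o,t)$,'' ignoring that the denominator $\max_{\tau\in\Gamma}\eta.T(\tau)$ also grows when $\eta$ is replaced by an ancestor. Your worry is concretely realized: take $N$ with two children where $o_1.T=(0.5,0.5)$ and $o_2.T=(0.9,0.1)$ over two intervals, so $N.T=(0.9,0.5)$; at the second interval $F_t(o_1,t)=0.5/0.5=1$ while $F_t(N,t)=0.5/0.9<1$. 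The same issue afflicts $F_s$, whose denominator ranges over ``$\eta$'s brothers,'' a different set at each level of the tree. So as the quantities are literally defined, the termwise inequalities for the temporal and social components --- and hence the theorem --- can fail, and the statement is only rescued by reading the normalizers as fixed, level-independent constants (a global maximum, or a common reference set shared by $N$ and $o$), which is presumably the authors' intent but is not what Eqn.~\eqref{eq:visiting_time_score} and Eqn.~\eqref{eq:social_friendship_effect} say. You have correctly located where the real content of the proof ought to live; the paper's own argument does not supply it.
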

\begin{proof}
First, for an internal node $N$, the MBR of $N$ encloses all descendant objects, i.e., $\forall o \in$ $N$'s descendants, $\min\delta(q,N.rec) \le \delta(q,o)$, and the categories heterogeneity of $N$ must be not less than it descendants. Hence, it follows $F_g(q,N) \ge F_g(q,o)$. Second, since TF-IDF weight is the multiplication of IDF and TF$^{max}$ in $N$, i.e., $\max_{d\in D_N}(tf_{w,d})$ where $D_N$ represents all the text documents for objects inside $N$, it indicates that $F_k(q,N) \ge F_k(q,o)$. Finally, since for node $N$, both check-in probability and user check-in number are maximal among $N$'s descendants, we have $F_t(N,t) \ge F_t(o,t)$ and $F_s(u,N) \ge F_s(u,o)$ at any query time $t$, for any user $u$. All these inequalities lead to $F(u,q,N) \ge F(u,q,o)$.
\end{proof}


\subsection{Theoretical Analysis}
The searching efficiency depends on the number of candidate nodes. Those searching algorithm evaluating objects according to single criterion sequentially is obviously inefficient. Nevertheless, our algorithm can perform (1) spatial pruning, (2) textual filtering, and (3) temporal check at the same time. Next, we will give an analysis of the time complexity of our algorithm.~\cite{chen2017time} takes the product of the number of objects accessed and average processing time of single object as the time complexity of their algorithm. We adopt the same estimation method.

Evidently, an object $o$ with high possibility of being a candidate for a query $q$ should satisfy: (1) $o.W \supseteq q.W$, (2) $o.T(q.t)\neq 0$, and (3) $\delta(o.l,q.l) \leq$ the search radius $r$. For these three requirements, we defined contain-keyword probability $P(o.W \supseteq q.W)$, exist-previous-record probability $P(o.T(q.t) \neq 0)$, and within-radius $P(\delta(o.l,q.l)\leq r)$. Let $P_{can}(o)$ be the probability of the event that $o$ is a candidate and $P_{can}(o) = P(o.W \supseteq q.W) \cdot P(o.T(q.t) \neq 0) \cdot P(\delta(o.l,q.l) \leq r)$. Then we present how to calculate these 3 probabilities respectively.

\subsubsection{Contain-Keyword Probability.}\label{ckp}
As~\cite{wu2012joint} has pointed out, the keywords in the data set follow a Zipfian distribution. $w_{i}$ is the word with the $i$-th highest occurrence-frequency. The occurrence probability of $w_{i}$ $P_{oc}(w_i)$ can be calculated through: 
$$P_{oc}(w_{i}) = \frac{i^{-s}}{\sum_{j=1}^{\vert D.w \vert}w_{j}^{-s}}$$
where $\vert D.w \vert$ is the total number of words in the data set $D$ and $s$ characterizes the skewness. Using the estimation model of Wu et al.~\cite{wu2012joint}, the contain-keyword probability is defined as
\begin{equation}
\begin{split}
  \label{ts1}
  P(o.W \supseteq q.W) = \sum_{any~list~of~o.key}\prod_{j=1}^{\vert q.W \vert}\frac{P_{oc}(w_{q.W_j})}{1-\sum_{k=1}^{j-1}P_{oc}(w_{q.W_{k}})}
\end{split}
\end{equation}

\subsubsection{Exist-Previous-Record Probability.}\label{eprp}
There exists a pattern in people's daily life that people tend to do something at one certain moment intensively, which follows a normal distribution. Intuitively, the number of check-ins of an object $o$ also follows this pattern. In practice, there is always more than one peak in the distribution. Hence, we use Gaussian Mixture Models. The probability of check-in occurring at a certain moment $t$ follows a superposition of a sequence of normal distributions $N_{1}(\mu_{1},\sigma_{1}), N_{2}(\mu_{2},\sigma_{2}),\ldots, N_{k}(\mu_{k},\sigma_{k})$. Then, we can get 
\begin{equation}
\begin{split}
  \label{ts2}
  P(o.T(q.t) \neq 0) = \sum_{i=1}^{k}\int_{on~\tau}f_i(t)dt
\end{split}
\end{equation}
where $\tau \ni q.t$ and $$f_{i}(t) = \frac{1}{\sqrt{2\pi\sigma_{i}}}\exp(-\frac{(x-\mu_{i})^2}{2\sigma_i})$$.

\subsubsection{Within-Radius Probability.}\label{wrp}
Pruning via spatial constraints is fundamental in spatial keyword query. Chen et al.~\cite{chen2017time} just assume the query radius is $\infty$ ignoring the spatial constraints to simplify the computation of time complexity. Here, we give an approximate estimation of within-distance probability $P(\delta(o.l,q.l) \leq r)$. We consider a square whose side-length is $\sqrt{\pi}\cdot r$, thus having the same area as $\odot(q.l,r)$. Leafnodes having intersection with this square have the potential objects that can be candidates. We aim to find the minimum number of leafnodes' MBRs that can cover the square. Let all the leafnodes' MBRs be placed in a rectangular plane coordinate system. The shortest side-length of all MBRs in $x$-axis direction is defined as $l_{x}$, and the shortest distance in the direction of $x$-axis between any two MBRs that do not overlap in $x$-axis direction is defined as $d_{x}$. We define $l_{y}$, $d_{y}$ in a similar way. Assuming the number of objects contained in a MBR is a constant $K$, the within-distance probability is defined as
\begin{equation}
\begin{split}
  \label{ts3}
  P(\delta(o.l,q.l) \leq r) \approx \frac{K}{\vert D \vert} \times\lceil\frac{\sqrt{\pi}\cdot r}{l_{x}+d_{x}}\rceil \times\lceil\frac{\sqrt{\pi}\cdot r}{l_{y}+d_{y}}\rceil
\end{split}
\end{equation}

With Eqn.~\eqref{ts1}~\eqref{ts2}~\eqref{ts3}, we can figure out the value of $P_{can}(o)$. Searching from the root of tree to a leafnode takes $O(\lg\vert D\vert)$ time. Assuming the time cost of grading a object $o$ is a constant $\zeta$, the average processing time of single object is $O(\lg\vert D\vert + \zeta)$. Hence, the time complexity of our algorithm is
\begin{equation}
\begin{split}
  \label{ts4}
  O(\vert D\vert\cdot P_{can}(o)\cdot (\lg\vert D\vert) + \zeta))
\end{split}
\end{equation}

\section{Experiments}\label{Experiment}

We systematically evaluate the performance of our proposed index and algorithms compared with state-of-the-art methods on two large-scale real-world datasets here. All the indices and algorithms are implemented in Python and run on a Linux server with 2.1 GHz Intel Xeon processor and 64GB RAM. All experiments are repeated 5 times and averaged results are reported. 

\subsection{Baseline}\label{baseline}
To give a comprehensive comparison, we implement three baseline frameworks, including one representative baseline framework IR-tree~\cite{li2011ir}, a recent state-of-the-art tree-based framework Routing R-tree~\cite{liu2018happened}, and a non-tree-based framework SKB-Inv index~\cite{zhang2018augmented}. Notice that these three baseline algorithms cannot solve STSKQ directly and demand for enhancing methods. These indices and methods are listed as follows:

\begin{itemize}
   
\item\textbf{IR-tree:} An IR-tree is an R-tree extended with inverted files. To tackle STSKQ, IR-tree firstly retrieves a candidate of objects by location proximity and keywords similarity. Then we rank the candidate with social relationship effect $F_s$ in Eqn.~\eqref{eq:social_friendship_effect} and visiting time score $F_t$ in Eqn.~\eqref{eq:visiting_time_score} to return the top-$k$ objects.

\item\textbf{Routing R-tree:} Routing R-tree enables spatio-temporal keyword search by constructing an R-tree for each time interval. Hence, to deal with STSKQ, Routing R-tree leverages time segment scheme to construct a group of R-trees. Each R-tree is built on a corresponding time interval in $\Gamma$, and we process query in one of the R-trees according to user's query time. Similarly, we further select the top-$k$ objects in accordance with $F_s$ in Eqn.~\eqref{eq:social_friendship_effect}.

\item\textbf{SKB-Inv index:} SKB-Inv index adopts k-means clustering algorithm in order to group objects by their spatial attribute, and further organizes the spatial objects into inverted lists based on other attributes or keywords. To process STSKQ, we regard check-in probabilities for the $24$ time intervals and users' check-in records as additional attributes for the group of objects. Thereafter, we retrieve the top-$k$ objects combining all the inverted lists in Eqn.~\eqref{eq:ranking_score_objects}.

\end{itemize}

\subsection{Experimental Setup}

\begin{table}
\centering
\caption{Datasets Statistics}
\label{tab:2}
      \begin{tabular}{ccc}
        \hline
        \textbf{Dataset} & \textbf{Yelp} & \textbf{Weeplaces} \\
        \hline
        \#objects & $99,798$ & $99,378$\\
        \#check-ins & $15,816,233$ & $7,658,368$ \\
        \#users & $527,532$ & $16,021$  \\
        \#neighbors & $16.7$ & $7.5$ \\
        \hline
    \end{tabular}
\end{table}

We use two real-world datasets, Yelp\footnote{Available at https://www.yelp.com/dataset} and Weeplaces, to verify the effectiveness of NETR-Tree. Yelp is obtained from Yelp Dataset Challenge and Weeplaces~\cite{liu2014exploiting} is collected from the popular LBSN Weeplaces. Both datasets contain geographic locations, keywords, check-in time, and relationship information. Table~\ref{tab:2} reports the statistical information of the two datasets.

We investigate the performance of our proposed index and algorithms for STSKQ under sorts of parameters listed in Table~\ref{tab:3}. Besides, both $\alpha$ and $\beta$ in Eqn.~\eqref{eq:ranking_score_objects} are all set to $0.25$, $\theta$ in Eqn.~\eqref{eq:geo_spatial_score} is set to $0.5$. For every set of the experiments, $100$ random queries are evaluated to measure both the average processing time and the average disk I/O (i.e., the number of node accesses). 

\begin{table}
\centering
\caption{Parameter Setting}
\label{tab:3}
\begin{tabular}{ccc}
      \hline
      \textbf{Parameter} & \textbf{Range} & \textbf{Default} \\
      \hline
      k  & $1,3,5,7,9$ & $5$ \\
      $|$q.W$|$ & $1,3,5,7,9$ & $5$\\
      search radius (km) & $4,8,12,16,20$ & $12$ \\
      $\gamma$ in Eqn.~\eqref{eq:ranking_score_objects}  & $0.1,0.2,0.3,0.4,0.5$ & $0.3$ \\
      \hline
\end{tabular}
\end{table}

\subsection{Performance Evaluation}

In this section, we conduct a set of experiments to evaluate our NETR-Tree on the efficiency of index construction and STSKQ processing in two different real-life LBSNs datasets, compared with different baseline algorithms proposed in Sec~\ref{baseline}. We evaluate the efficiency of STSKQ processing the query from several aspects: the number of the result objects, the number of query keywords, the influence of different search radii and the weight of the social relationship effect.

{\bf Index construction cost: }
We first evaluate the construction time and index size of NETR-Tree with three baseline algorithms on two datasets Yelp and Weeplaces in Fig.~\ref{fig:construction_cost}. Particularly, Routing R-tree and SKB-Inv index have a dramatically high cost in time and space for both datasets as shown in  Fig.~\ref{fig:construction_cost}(a)(b). Routing R-tree maintains an R-tree for every time-interval while an object can exist in many different R-trees simultaneously. For SKB-Inv index, it maintains an inverted list for every time interval, keyword, and user's check-in record. Both of these methods lead to extremely high redundancy. In contrast, as demonstrated in Fig.~\ref{fig:construction_time}, IR-tree is the most constructing-efficient due to the absence of temporal and social information. Despite taking all the information into account, NETR-Tree is the most efficient and lightweight in construction cost if we exclude IR-tree.

{\bf Effect of $k$: }
Next, we investigate the effect of varying $k$ (i.e., the number of the objects returned) on processing time and I/O cost. With the result shown in Fig.~\ref{fig:effect_k}, NETR-Tree exceeds other algorithms by a wide margin. Besides, all the algorithms perform better in Weeplaces than Yelp, especially for Routing R-tree and SKB-Inv index. The reason is that Weeplaces has much less social relationship information than Yelp, which alleviates the load in processing social information. IR-tree has the worst performance on processing time in Fig.~\ref{fig:effect_k}(c)(d). Without proper processing of temporal and social information, IR-tree retrieves large numbers of false positive objects in its candidate, and the additive operation to filter those objects leads to high time cost.

\begin{figure}
  \centering
  \subfigure[Construction Time]{
    \label{fig:construction_time}  
    \includegraphics[width=0.23\textwidth]{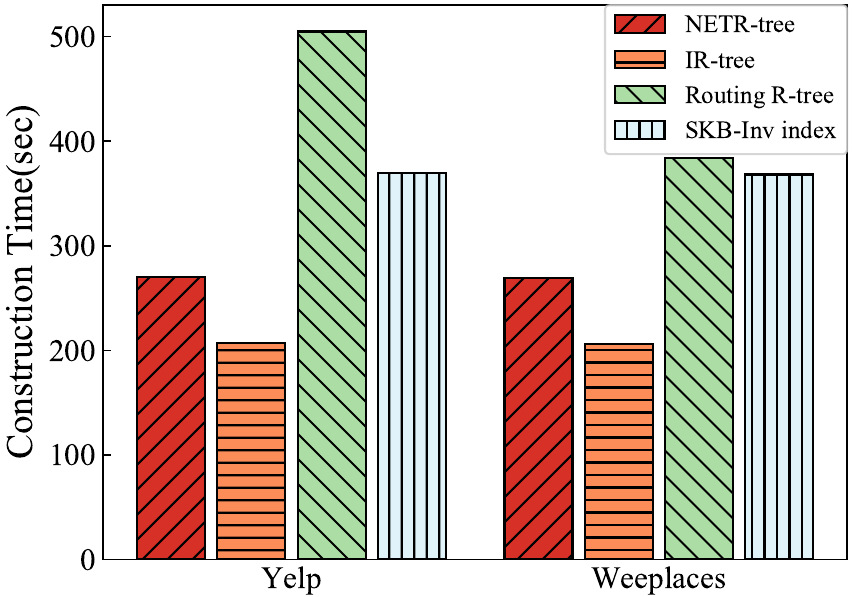}}
  \subfigure[Index Size]{
    \label{fig:index_size}   
    \includegraphics[width=0.23\textwidth]{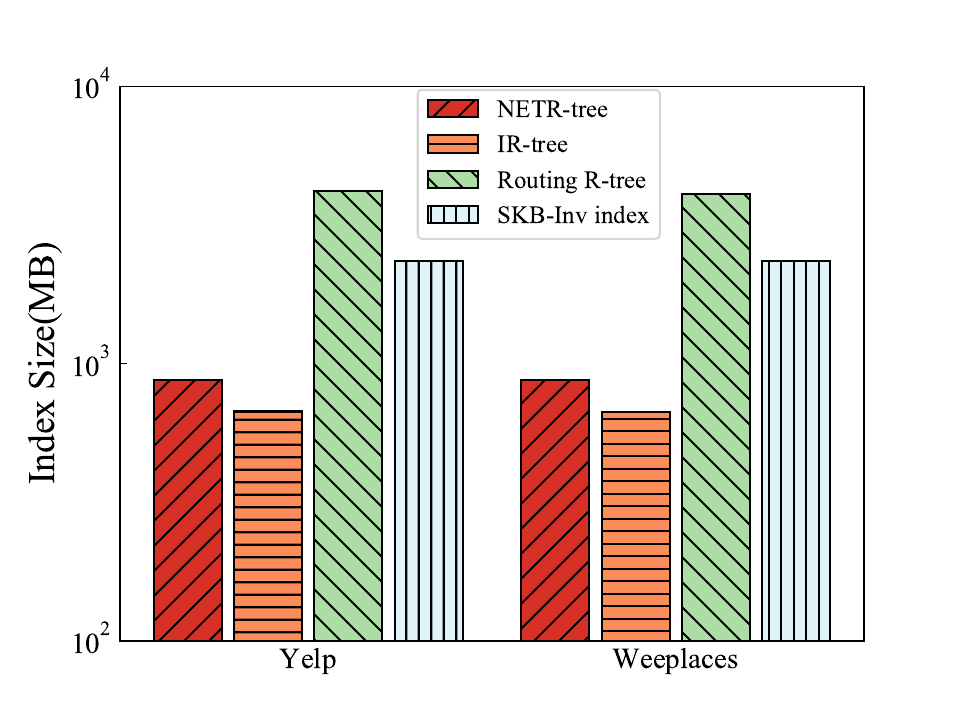}}
  \caption{Index construction cost}
  \label{fig:construction_cost} 
\end{figure}

{\bf Effect of $|$q.W$|$: }
Then, we study the impact of the number of query keywords, as illustrated in Fig.~\ref{fig:effect_keywords}. Clearly, NETR-Tree performs better than other algorithms in both processing time and I/O cost. In addition, it can be seen from Fig.~\ref{fig:effect_keywords}(c)(d) that the processing cost of SKB-Inv index ascends with the growth of $|$q.W$|$, since  SKB-Inv index needs to scan more group of objects with more keywords requested. Furthermore, as for processing time, Routing R-tree performs well in Weeplaces as shown in Fig.~\ref{fig:effect_keywords}(c), whereas it has a poor performance in Yelp in Fig.~\ref{fig:effect_keywords}(a). Similarly, the reason for the latter is that Yelp has much heavier burden in processing social relationship information. In Weeplaces, the good performance of Routing R-tree as well as NETR-Tree is mainly because of their time segment scheme and temporal information. In return, Routing R-tree makes sacrifices on its construction cost, while NETR-Tree shows an effective trade off between construction and processing.

\begin{figure*}
  \includegraphics[width=\textwidth]{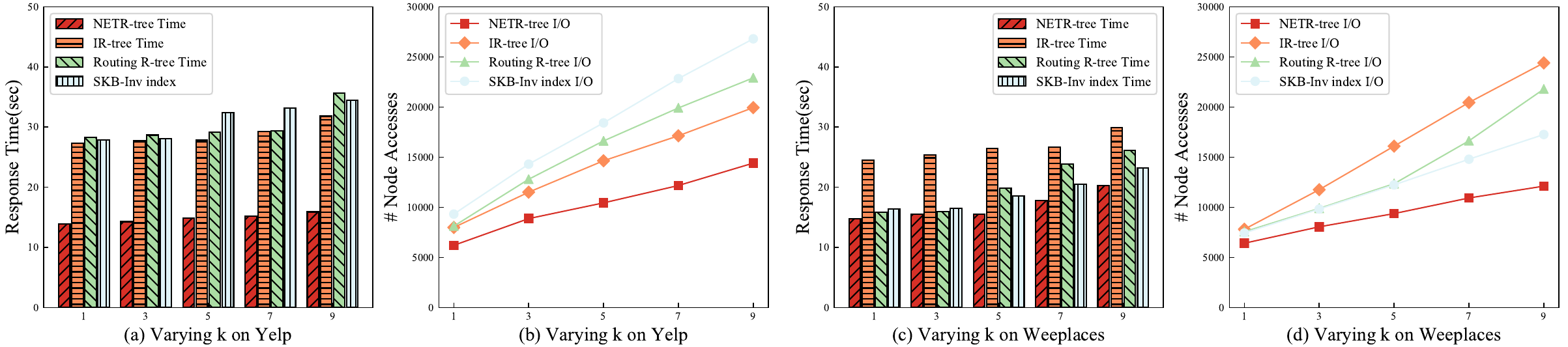}
  \caption{Effect of k} \label{fig:effect_k}
\end{figure*}

\begin{figure*}
  \includegraphics[width=\textwidth]{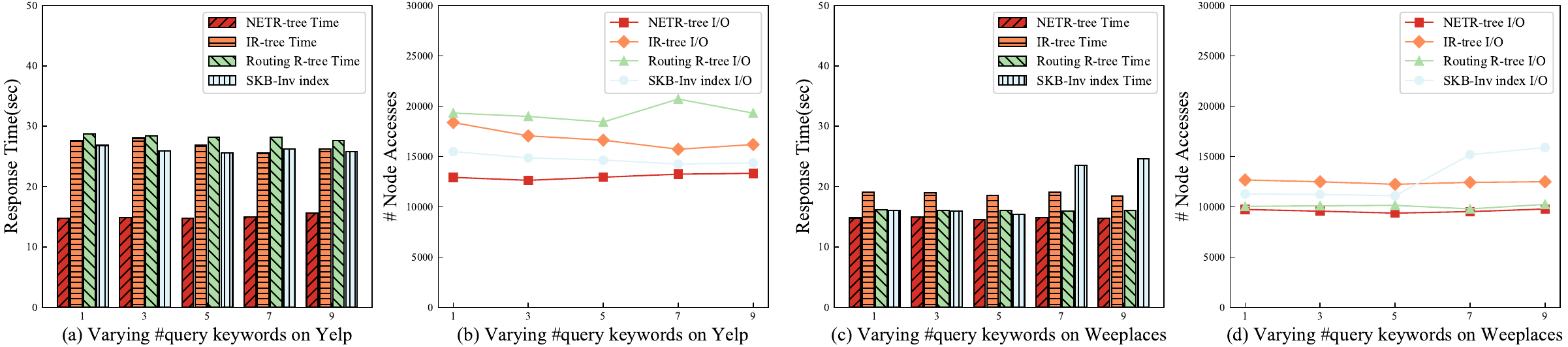}
  \caption{Effect of $|$q.W$|$} \label{fig:effect_keywords}
\end{figure*}

\begin{figure*}
  \includegraphics[width=\textwidth]{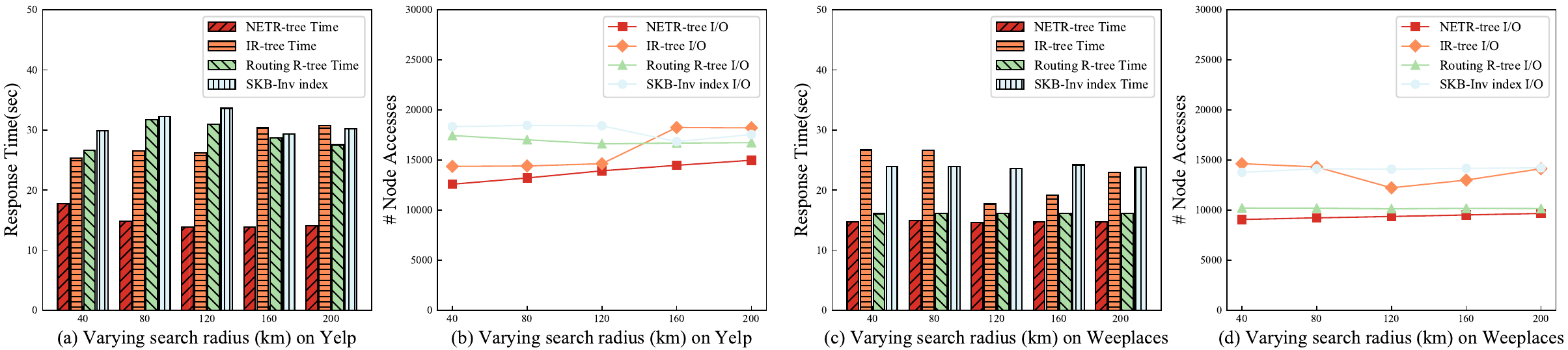}
  \caption{Effect of search radius} \label{fig:effect_area}
\end{figure*}

\begin{figure*}
  \includegraphics[width=\textwidth]{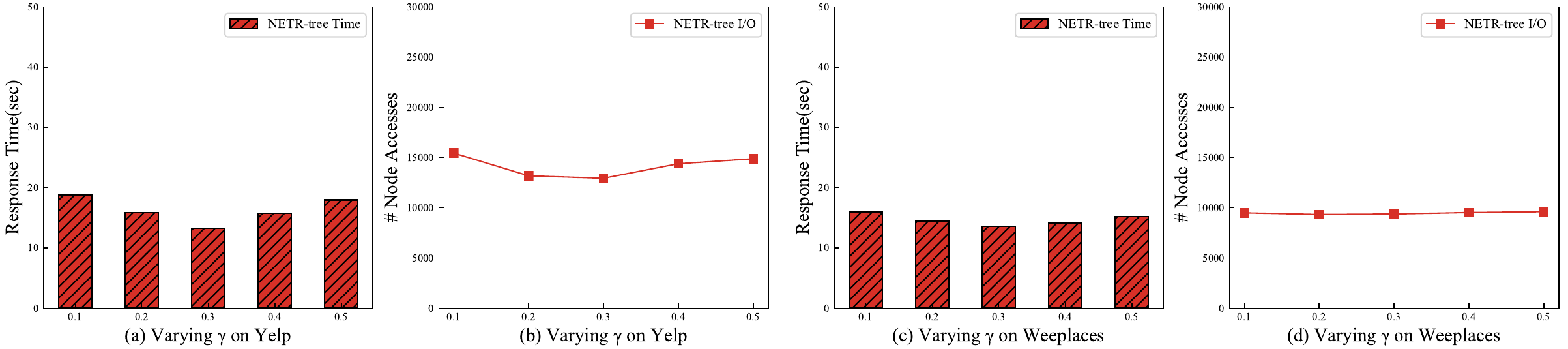}
  \caption{Effect of $\gamma$} \label{fig:effect_gamma}
\end{figure*}

{\bf Effect of search radius (km): }
In this experiment, we evaluate the influence of different search radii. As depicted in Fig.~\ref{fig:effect_area}, NETR-Tree again performs the best since the temporal and social information helps prune out many irrelevant objects. Moreover, as shown in Fig.~\ref{fig:effect_area}(c)(d), owing to time segment scheme, the processing time of NETR-Tree and Routing R-tree stay stable while other algorithms' performance varies with search radius.

{\bf Effect of $\gamma$: }
Last but not least, we inspect the effect of $\gamma$ in Eqn.~\eqref{eq:ranking_score_objects}, where $\gamma$ is the weight of the social relationship effect. Since Eqn.~\eqref{eq:ranking_score_objects} is only used for our NETR-Tree, this is an internal experimental evaluation. As shown in Fig.~\ref{fig:effect_gamma}(a), with the growth of $\gamma$, the processing time of NETR-Tree in Yelp decreases at first and then increases after $\gamma$ reaches $0.3$. The reason is that with the weight of social relationship effect increasing, NETR-Tree can prune out more socially irrelevant objects. In the meantime, other criteria are losing their weight in the score function, which leads to the subsequent upswing. Moreover, the better performance in Fig.~\ref{fig:effect_gamma}(c)(d) compared with Fig.~\ref{fig:effect_gamma}(a)(b) demonstrates again that it is more difficult to process a query in Yelp with more social information. This further exemplifies the effective application of our proposed NETR-Tree.

\section{Conclusion} \label{conclusion}
In this paper, we formulate the \emph{Social-based Time-aware Spatial Keyword Query} (STSKQ), which takes both spatial constraint, temporal information, and social relationship into consideration. To address it, we propose a novel index structure, which is a two-layer hybrid framework named Network Embedding Time-aware R-tree (NETR-Tree). The two-layer scheme and query processing algorithms are designed to tackle STSKQ efficiently. In order to deal with massive user relationship networks, NETR-Tree exploits the network embedding strategy to measure the social effect when a user issues a query. Thus, NETR-Tree can give top-$k$ result objects based on geo-spatial score, keywords similarity, visiting time score, and social relationship effect. Finally, extensive experiments on two real datasets verify the efficiency and effectiveness of NETR-Tree.


\bibliographystyle{IEEEtran}
\bibliography{NETR-Tree}

\end{document}